\renewcommand{\algorithmicrequire}{\textbf{Input:}}
\def\argmax{\mathop{\rm arg\,max}}
\title{Faster Algorithms for RNA-folding using the Four-Russians method}
\author{Balaji Venkatachalam \and Dan Gusfield \and Yelena Frid}
\institute{Department of Computer Science, UC Davis\\
\{balaji, gusfield, yafrid\}@cs.ucdavis.edu}
\begin{document}
\maketitle

\begin{abstract}
The secondary structure that maximizes the number of
non-crossing matchings between complimentary bases of an RNA sequence
of length $n$ can be computed in $O(n^3)$ time using Nussinov's dynamic
programming algorithm.  The Four-Russians method is a technique that will
reduce the running time for certain dynamic programming algorithms by
a multiplicative factor after a preprocessing step where solutions to all smaller
subproblems of a fixed size are exhaustively enumerated and solved.  Frid and
Gusfield designed an $O(\frac{n^3}{\log n})$ algorithm for RNA folding
using the Four-Russians technique.  In their algorithm the preprocessing is interleaved with the algorithm
computation.

We simplify the algorithm and the analysis by doing the preprocessing
once prior to the algorithm computation. We call this the {\em two-vector}
method.  We also show variants where instead of exhaustive preprocessing,
we only solve the subproblems encountered in the main algorithm once and
memoize the results.  We give a simple proof of correctness and explore
the practical advantages over the earlier method.

The Nussinov algorithm admits an $O(n^2)$ time parallel algorithm. We show
a parallel algorithm using the two-vector idea that improves the time
bound to $O(\frac{n^2}{\log n})$.

We have implemented the parallel algorithm on graphics processing units
using the CUDA platform.  We discuss the organization of the data structures
to exploit coalesced memory access for fast running times.  The ideas
to organize the data structures 
also help in improving the running time of the serial algorithms.
For sequences of length up to 6000 bases the parallel algorithm takes
only about 2.5 seconds and the two-vector serial method takes about 57
seconds on a desktop and 15 seconds on a server. Among the serial algorithms, the two-vector and memoized
versions are faster than the Frid-Gusfield algorithm by a factor of 3,
and are faster than Nussinov by up to a factor of 20.

\end{abstract}


\section{Introduction}

Computational approaches to find the secondary structure of RNA molecules
are used extensively in bioinformatics applications.
The classic dynamic programming (DP) algorithm proposed in the 1970s
has been central to most structure prediction algorithms.  While the
objective of the original algorithm was to maximize the number of
non-crossing
pairings between complementary bases, the dynamic programming approach
has been used for other models and approaches, including minimizing
the free energy of a structure.  The DP algorithm runs in cubic time
and there have been many attempts at improving its running time. Here,
we use the Four-Russians method for speeding up the computation.

The Four-Russians method, named after Aralazarov et al.~\cite{four-rus}, is a
method to speed up certain dynamic programming algorithms.  In a typical
Four-Russians algorithm there is a preprocessing step that exhaustively
enumerates and solves a set of subproblems and the results are tabled.  In the main
DP algorithm, instead of filling out or inspecting individual cells,
the algorithm takes longer strides in the table.  The computation for
multiple cells is solved in constant time by utilizing the preprocessed
solutions to the subproblems.  The longer strides to fill the table
reduce the runtime by a multiplicative factor.  The size of the subproblems is chosen
in a way that does not make the preprocessing too expensive.

Frid and Gusfield~\cite{FG} showed the application of the
Four-Russians approach for RNA
folding.  In their algorithm, the preprocessing is
interleaved with the algorithm computation. They fill out a part of
the DP table and use these entries to complete a part of the
preprocessing.  The preprocessed entries are 
used later in the computation.

We show a simpler algorithm where all the preprocessing is completed before the start of the
main algorithm.  This simplifies the correctness proof and the runtime
analysis. 
This approach helps in obtaining a $\log n$ factor improvement for the parallel algorithm. 
In comparing various methods for RNA folding, Zakov and Frid (personal
communication)
had independently observed that the algorithm in~\cite{FG} could be
modified to do the preprocessing at once.  It is essentially the idea
as described here.

In this paper we explore the implications of the one-pass preprocessing
idea.  This description of the algorithm leads naturally to two other
variants. We empirically evaluate these variants and also the
implementation of the parallel algorithm.  


The parallel architecture of general-purpose graphical processing
units (GPUs) have  been exploited for many real-world application
in addition to applications in gaming and visualization problems. 
GPUs have also been used to speed up RNA folding algorithms~\cite{Chang2010,rizk,SGM2011}.
Here we show how the Four-Russians method
allows an organization of the data structures for fast memory accesses.
We also describe the organization of the parallel hierarchy to exploit
the inherent parallelism of the solution.

In the rest of the section, we describe the problem in relation to the
other problems in RNA folding.  To keep the paper self-contained, we will
first describe the {\em two-vector
algorithm}, our application of the Four-Russians method  to the RNA folding problem.  We will use that description to
describe the original Four-Russians method for RNA folding by Frid and
Gusfield~\cite{FG}.  This discussion leads to two other variants where the preprocessing
is done on demand, instead of the exhaustive preprocessing in the
two-vector method and the Frid-Gusfield algorithm.  In section~\ref{sec:par} we
discuss the $O(n^2/ \log n)$ parallel algorithm.  We will then describe
the implementation of a parallel algorithm using CUDA.  The final sections
have discussion on empirical observations and conclusions.  Due to
space limitations, this manuscript focuses mostly on the theoretical
aspects and describes the experimental results briefly.  Detailed
discussion can be found in~\cite{RNA-tr}.

\paragraph{Related work}

The $O(n^3)$ dynamic programming algorithm due to Nussinov et
al.~\cite{Nussinov1978,Nussinov1980} maximizes the number of
non-crossing matching
complimentary bases. There have been many methods since Zuker and
Stiegler~\cite{Zuker1981} that infer the folding using thermodynamic
parameters~\cite{Tinoco1973,Mathews2004b}
which are more realistic than maximizing the number of base
pairs.  These methods have been implemented in many packages
including UNAFold~\cite{Markham2008}, Mfold~\cite{Zuker2003},
Vienna RNA Package~\cite{Hofacker2003},
RNAstructure~\cite{Reuter2010}.

Probabilistic methods include stochastic context-free
grammars~\cite{Durbin1998,Dowell2004}, the maximum expected accuracy~(MEA)
method, where secondary structures are composed of pairs that have a
maximal sum of pairing probabilities, eg., MaxExpect~\cite{Lu2009},
Pfold~\cite{Knudsen2003}, CONTRAfold~\cite{Do2006} which maximize
the posterior probabilities of base pairs; and Sfold~\cite{Ding2003},
CentroidFold~\cite{Hamada2009} that maximize the centroid estimator.
There are also other methods that use a combination of thermodynamic
and statistical parameters~\cite{Andronescu2007} and methods that
use training sets of known folds to determine their parameters,
eg., CONTRAfold \cite{Do2006}, and Simfold\cite{Andronescu2010} and
ContextFold\cite{Zakov2011}.

In addition to the Four-Russians method, other methods to
improve the running time include Valiant's max-plus matrix
multiplication by Akutsu~\cite{Akutsu1999} and Zakov et al.~\cite{Zakov2010};
and sparsification, where the branch points are pruned to get an improved
time bound~\cite{Wexler2007,Backofen2011Zakov}.

CUDA, the programming platform for GPGPUs,  has been used to solve many
bioinformatics problems. Chang, Kimmer and Ouyang~\cite{Chang2010}
and Stojanovski, Gjorgjevikj and Madjarov~\cite{SGM2011} show an
implementation of the Nussinov algorithm on CUDA.  Rizk et al.~\cite{rizk}
describe the implementation for Zuker and Stiegler method involving
energy parameters.  
These methods are discussed in section~\ref{sec:cuda-rel}.

\section{The Nussinov Algorithm}

In this paper, we consider the basic RNA folding problem of maximizing
the number of non-crossing complimentary base pair matchings.
Complimentary bases can be paired, i.e., {\tt A} with {\tt U} and {\tt
C} with {\tt G}. A set
of disjoint pairs is a matching. The pairs in a matching must not
cross, i.e., if bases in positions $i$ and $j$ are paired and if bases
$k$ and $l$ are paired, then either they are nested, i.e., $i < k < l<
j$ or they are non-intersecting, i.e.,  $i<j<k<l$.  The objective is
to maximize the number of pairings under these constraints.

The following algorithm, due to Nussinov~\cite{Nussinov1978} maximizes the number
of non-crossing matchings.  For an input sequence $S$ of length $n$
over the alphabet {\tt A, C, G, U},
the recurrence is defined as follows. Let $D(i,j)$ denote the optimal
cost of folding for the subsequence from $i$ to $j$.  For all $i$,
$D(i, i-1) = D(i,i) = 0$
and for all $i < j$:
\begin{equation}
D(i,j)  =  \max \left\{
\begin{array}{l}
b(S(i), S(j)) + D(i+1, j-1) \\
\max_{i+1 \leq k \leq j} D(i, k-1)+D(k,j)
\end{array}
\right.
\label{eqn:recur}
\end{equation}
where $b(.,.) = 1$ for complimentary bases and $0$ otherwise.
The DP table is the upper triangular part of the $n\times n$ matrix.
The optimal solution is given by $D(1, n)$.  The table can be filled
column-wise from the first column till the $n^\textrm{th}$.  There are other ways of
filling the table too, eg., along the diagonals --- the $(i,i)$-diagonal
first, $(i, i+1)$-diagonal next and so on, until the last diagonal with
one entry, $D(1,n)$.  To allow for traceback we need to store the bases
that are paired to get the maximum value.  Let $D^*(i,j)$ denote the
corresponding indices.  These are obtained by substituting $\argmax$
in place of $\max$ in the above recurrence and can be computed along
with the $\max$ value.

The first part of the recurrence can be solved in constant time. The
second part is more expensive, incurring $\Theta(n)$ look ups and maximum
computations. There are $O(n^2)$ entries in the DP table and each cell
can be computed in $O(n)$ time, giving an $O(n^3)$ time algorithm.

\section{The Four-Russians Algorithms}
In this section we discuss three variants of the Four-Russians
algorithm.  We will first describe the {\em two-vector} approach.
Since it is simpler than the other methods  we will use the
description to discuss two other variants.

\subsection{Two-vector algorithm}
\label{sec:two-vec}
To apply the Four-Russians technique we start with the following
observation:

\begin{lemma} \label{lem:diff}
The values along a column from bottom to top and along a row from left
to right are monotonically non-decreasing.  Consecutive cells differ at
most by 1.  \end{lemma}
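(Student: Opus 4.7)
The plan is to give a direct combinatorial argument, relying on the interpretation of $D(i,j)$ as the maximum size of a non-crossing matching of complementary bases on the substring $S[i..j]$. The two halves of the statement (monotonicity along rows, and along columns from bottom to top) are perfectly symmetric, so I will carry out the argument for rows, i.e., show $0 \le D(i, j+1) - D(i, j) \le 1$, and note that the column direction follows from the mirror argument applied to the left endpoint.

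Monotonicity is immediate: any non-crossing complementary matching $M$ on $[i, j]$ is also a valid non-crossing matching on $[i, j+1]$ (just leave position $j+1$ unpaired), so $D(i, j+1) \ge D(i, j)$.

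For the unit-gap bound, let $M^*$ be an optimal non-crossing matching for $[i, j+1]$. If $j+1$ is unmatched by $M^*$, then $M^*$ itself is a valid matching on $[i, j]$ and $D(i, j) \ge |M^*| = D(i, j+1)$. Otherwise $M^*$ contains a unique pair of the form $(k, j+1)$ with $i \le k \le j$. Let $M' = M^* \setminus \{(k, j+1)\}$, so $|M'| = |M^*| - 1$. Every remaining pair of $M^*$ was non-crossing with $(k, j+1)$, and therefore lies entirely in $[i, k-1]$ or in $[k+1, j]$; hence $M'$ is a valid non-crossing matching on $[i, j]$ and $D(i, j) \ge |M'| = D(i, j+1) - 1$. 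Combining the two inequalities yields $D(i, j+1) - D(i, j) \in \{0, 1\}$.

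The column statement $D(i-1, j) - D(i, j) \in \{0, 1\}$ is proved identically, by considering whether position $i-1$ is unmatched in an optimum matching for $[i-1, j]$ or paired with some $k \in [i, j]$, and deleting that pair if necessary. The proof is elementary; the only subtle point is verifying that removing a single pair from an optimum matching preserves the non-crossing property, and this is exactly where the non-crossing condition on $M^*$ is used, so I would state that step explicitly.
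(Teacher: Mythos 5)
Your proof is correct and follows essentially the same route as the paper's: monotonicity because a matching of the shorter substring remains valid on the longer one, and the unit gap by deleting the (at most one) pair involving the extra endpoint from an optimal matching of the longer substring to obtain a valid matching of the shorter one with at most one fewer pair. You phrase it directly where the paper argues by contradiction, and you are more explicit about checking that the residual matching is still non-crossing, but the underlying argument is the same.
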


\begin{proof}
Consider neighboring cells $(i,j)$ and $(i+1,j)$.  $D(i,j)$ represents
the solution of a longer sequence than $D(i+1,j)$. Therefore the
former value should be at least as large as the latter.  Suppose
$D(i,j)$ differed from  $D(i+1,j)$ by more than one.  Then we can
remove any matching for $i$.  This has at most one fewer base pair matching
and is a valid solution for the
subsequence $(i+1, j)$ with a larger value than its current value,
contradicting the optimality of $D(i+1,j)$.  An analogous argument
holds along the columns. 
\end{proof}

Once the cells $D(i, l),\, D(i, l+1),\, \ldots,\, D(i, l+q-1)$ are
computed, for some $l \in \{i,\ldots, j-q\}$, they can be represented by $D(i,l) + V_0,\, D(i, l)+V_1,\,
\ldots,\, D(i, l) + V_{q-1}$, where $V_p = D(i, l+p) - D(i, l)$, for
$p \in \{0, \ldots, q-1 \}$.  Let us define, $v_0 = 0$ and $v_p = V_p -
V_{p-1}$, for $p \in \{1, \ldots, q-1 \}$.  From lemma~\ref{lem:diff},
$v_p \in \{0,1\}$, for all $p \in [0,q-1]$.  Let $\mathbf{v}$ denote
the binary vector $v_0, v_1, \ldots, v_{q-1}$ of differences and let $\mathbf{V}$
denote the vector of running totals $V_0, V_1, \ldots, V_{q-1}$.

Since the $v_p$'s are defined from $V_p$'s, the inverse function is
well defined:  $V_p = \sum_{k=0}^{i} v_k.$  Thus $D(i, l)$ together with
the vector $\mathbf{v}$ represents $q$ consecutive cells of the table.

Similarly, since the values are non-increasing
down a column, $D(i+l+1, j), \ldots, D(i+l+q, j)$ be represented by the
pair $D(i+l+1, j), \mathbf{\bar{v}}$, where   $ \mathbf{\bar{v}} \in \{0, -1\}^q$.
We call $\mathbf{v}$ the {\em horizontal difference vector} or the
{\em horizontal vector} and we call $\mathbf{\bar{v}}$ the {\em
vertical difference vector} or the {\em column vector}.  The
corresponding vector of sums is denoted $\bar{V}$.

Consider $q$ consecutive cells from $l+1$ to $l+q$ used in computing $D(i,j)$:
\begin{eqnarray}
\label{eqn:nus} D(i,j) & \leftarrow & \max_{l+1 \leq k \leq l+q} \quad D(i, k-1) + D(k,j)\\
       & \leftarrow & \max_{0  \leq k \leq q-1} \qquad D(i,l) + V_k + D(i+l+1, j) +
       \bar{V}_k \nonumber\\
\label{eqn:4r}       & \leftarrow & D(i,l) + D(i+l+1, j) + \max_{0  \leq k \leq q-1}  \quad V_k +
       \bar{V}_k
\end{eqnarray}

As before, we use $\argmax$ in place of $\max$ to obtain $D^*(i,j),$ which facilitates the traceback.

As noted above the second line of the recurrence~(\ref{eqn:recur}),
looping over elements, is more expensive and we
will use (\ref{eqn:4r}) instead of (\ref{eqn:nus}) to compute the $D$ and
$D^*$ values in the Four-Russians method. That is, we will use
(\ref{eqn:4r}) for groups of $q$ cells each instead of one loop of
(\ref{eqn:recur}). Since the $V$ vectors are in
bijection with the $\mathbf{v}$ vectors, we will do the preprocessing
using $\mathbf{v}$.  Let $\mathbf{v}$ and  $\mathbf{\bar{v}}$ be
the corresponding vectors in (\ref{eqn:4r}).  The following algorithm
evaluates the $\max$ computation.

\algsetup{indent=2em}
\begin{algorithmic}[1]
\renewcommand{\algorithmicrequire}{\textbf{Input:}}
\REQUIRE {horizontal difference vector $\mathbf{v}$ and vertical
difference vector $\mathbf{\bar{v}}$}
\STATE $max\text{-}val \gets 0$ and $max\text{-}index \gets 0$
\STATE $sum_1 \gets 0$ and $sum_2  \gets 0$
\FOR{$k = 0$ to $q-1$}
  \STATE {$sum_1 \gets sum_1 + v_i$}
  \STATE {$sum_2 \gets sum_2 + \bar{v}_i$}
  \IF{$sum_1 + sum_2 > max\text{-}val$}
    \STATE {$max\text{-}val \gets sum_1 + sum_2 $}
    \STATE {$max\text{-}index \gets k$}
  \ENDIF
\ENDFOR
\RETURN ($max\text{-}val, max\text{-}index$)
\label{alg:preproc}
\end{algorithmic}

Using this instead of (\ref{eqn:nus}) is not advantageous in itself.
However, if this algorithm is given as a black box, $D(i,j)$ can
be computed in constant time by invoking the black box once.  In the
preprocessing stage, we will run the above algorithm for all possible
vector pairs of length $q$ and store the results in table $R$.
Table $R$ is indexed by a pair of numbers in the range $[2^q]$ to represent
the two vectors $(\mathbf{v}, \mathbf{\bar{v}})$.
Since there are two entries in the table, the lookup is a constant
time operation.
We will show later that this exhaustive enumeration is not too expensive.

%

In the Nussinov algorithm described in the previous section, the
recurrence is evaluated using (\ref{eqn:nus}) and it takes $O(q)$
time. In the Four-Russians method, using the preprocessing step, the
$\max$ computation is available through a table lookup and the recurrence
for $q$ terms can be completed in constant time.  This reduction in the
computation time is the reason for the speedup by a factor of $q$.


\begin{figure}
\centering \includegraphics[scale=0.35]{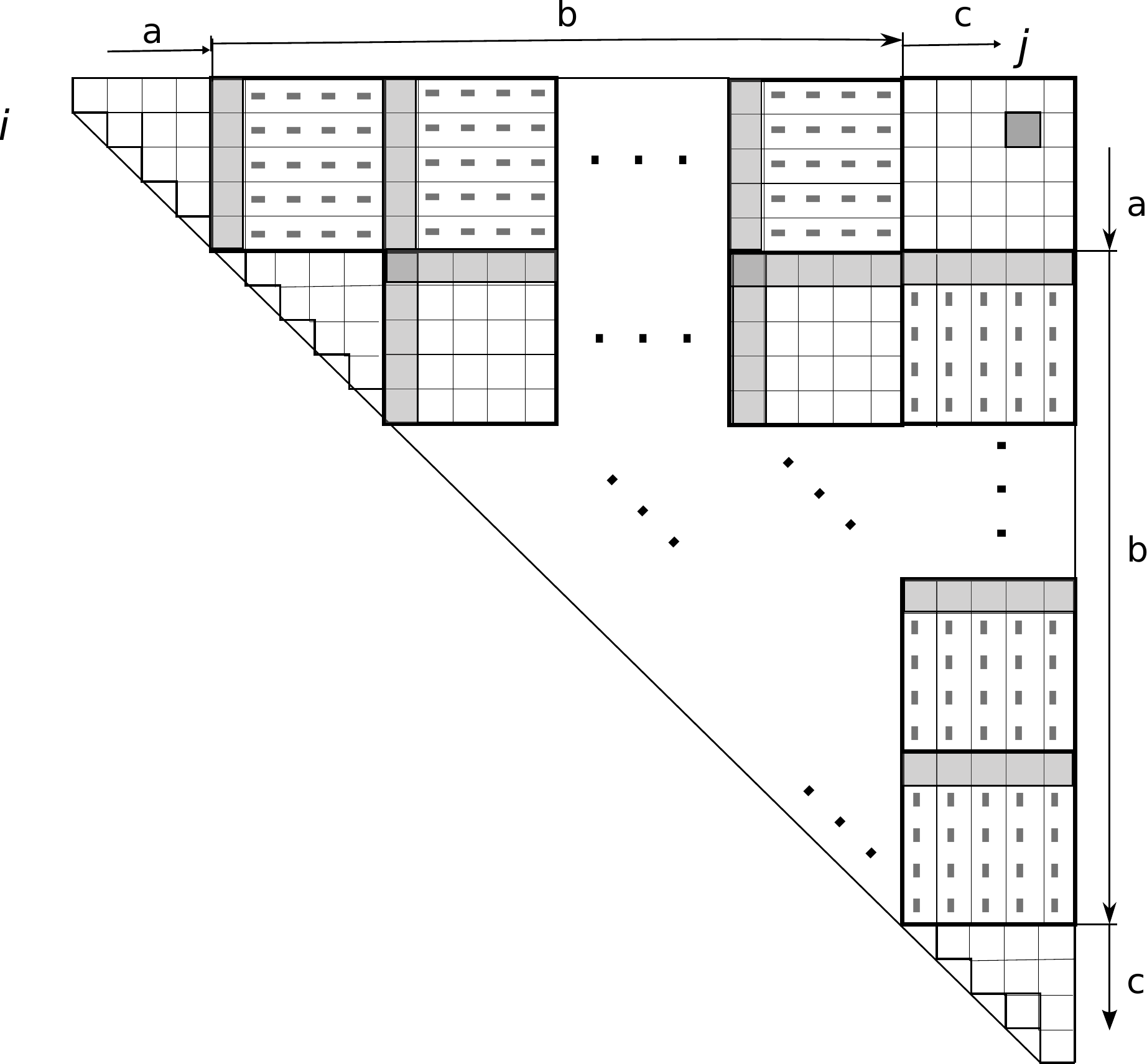}
\caption[The two-vector method.]{A diagrammatic representation of the
two-vector method. The row and column blocks are matched as labelled. The gray boxes and the gray dashes
show the initial value and difference vectors.  The group of cells in
$b$ correspond to the Four-Russians
loop in lines \ref{alg:four-rus-loop}--\ref{alg:end-four-rus-loop} of
Algorithm~\ref{alg};
the cells in $a$ are used in the loop in lines \ref{alg:first-loop}--\ref{alg:end-first-loop}
and the cells in $c$ form the loop in lines \ref{alg:sec-loop}--\ref{alg:end-sec-loop}.}
\label{fig:four-rus}
\end{figure}

The two-vector method modifies the Nussinov algorithm as
follows.  All the rows and columns of the table are grouped into groups
of $q$ cells each. The recurrence over these $q$ cells is computed in
constant time using the preprocessing table.  The recurrence involves
$D(i, k-1) + D(k, j)$, i.e., the value in the $(k-1)^\textrm{st}$
column is used with the $k^\textrm{th}$ row. Therefore the row and
column groupings differ by one.  That is, the columns are grouped $(0, 1,
\ldots, q-1)$, $(q, q+1, \ldots, 2q-1)$ etc.  The rows are grouped $(1,
2, \ldots, q)$, $(q+1, q+2, \ldots, 2q)$ etc.    This ensures that the
row and column groups are well characterized.  That is, to fill the
cell $(i,j)$, the $k^{\textrm{th}}$ group along row $i$ needs to be
combined with the $k^{\textrm{th}}$ group below $(i,j)$ in column $j$.

The cells of the table are filled in the same order as before.  When the
last cell of a row- or a column- group is evaluated the corresponding row
and column vectors are computed and stored.  To fill cell $(i,j)$,
we retrieve the first element and the horizontal vector
of the group from row $i$ and the first element and the column vector from the
corresponding group in column $j$.  The recurrence is solved using
(\ref{eqn:4r}) by a table lookup.  The final value for $D(i,j)$ is the
maximum value over all the groups. There might be residual elements in
the row that do not fall in these groups.  There are at most $2q$ such
elements. These are solved separately
using Nussinov's method. Algorithm~\ref{alg} has the algorithm listing and
Figure~\ref{fig:four-rus} describes the algorithm pictorially.

\renewcommand{\algorithmiccomment}[1]{$\qquad$ // #1}
\begin{algorithm}[t!]
\caption{Procedure for the two-vector Four-Russians speedup. The DP table is filled column-wise.}
\label{alg}
\begin{algorithmic}[1]
\STATE {$R \gets $ preprocess all pairs of vectors of length $q$}
\FOR{$j = 1$ to $n$ }
  \STATE $D(j,j) \gets 0$
  \FOR{$i = j-1$ down to $1$ }
    \label{alg:ij} \STATE $D(i,j) \gets b(S[i], S[j]) + D(i+1, j-1)$
    \STATE Let $(i,i)$ be in the $I^{\textrm{th}}$ group  in row $i$.
    \STATE Let $(i,j)$ be in the $J^{\textrm{th}}$ group horizontally
    in the $i^{\textrm{th}}$ row and $J'^{\textrm{  th}}$ group
    vertically in the $j^{\textrm{th}}$ column.
    \STATE {Let $i_q$ be the right-most entry of group $I$ and $j_q$ be the left-most entry 
    in group $J$}
    \FOR[For all cells in the first group] {$k = i+1$ to $i_q $} \label{alg:first-loop}
      \STATE $D(i,j) \gets \max(D(i,j), \,D(i, k-1)+D(k, j))$
    \ENDFOR \label{alg:end-first-loop}
    \FOR[For all cells in the last group]{$k = j_q$ to $j$}\label{alg:sec-loop} 
      \STATE $D(i,j) \gets \max(D(i,j), \,D(i, k-1)+D(k, j))$
    \ENDFOR \label{alg:end-sec-loop}
    \FOR[For all groups in between]{$k = 1$ to $J-I$ } \label{alg:four-rus-loop}
      \STATE Let $p$ be the left-most cell in the $k^{\textrm{th}}$
      group to the right of $I$ and $q$ be the top-most cell in
      the $k^{\textrm{th}}$ group below $J'$.
      \STATE Let $v_p$ and $v_q$ be the corresponding horizontal and
      vertical difference vectors.
      \STATE $D(i,j) \gets \max( D(i,j), \, D(i,p) + D(q,j) + R(v_p,
      v_q))$ \label{alg:four-rus-step}
    \ENDFOR \label{alg:end-four-rus-loop} 
    \IF[compute the vertical difference vector]{$i \mod q = 0$} 
      \STATE compute and store the $\mathbf{v}$ vector
      $i/q^\textrm{th}$ group for column $j$
    \ENDIF
    \IF[compute the horizontal difference vector]{$j \mod q = q-1$} 
      \STATE compute and store the $\mathbf{v}$ vector
      $(j-1)/q^\textrm{th}$
      group for row $i$
    \ENDIF
  \ENDFOR  

\ENDFOR 
\end{algorithmic}
\end{algorithm}
\paragraph{Runtime Analysis.}  In the precomputation phase,
there are $2^q$ $q$-length vectors and $2^{2q}$ pairs of vectors.
The precomputation takes $O(q)$ time per vector pair.  Thus the total
time for precomputation is $O(q2^{2q}).$

The main algorithm:  There are $O(n^2)$ cells and to fill each cell it
takes $O(n/q + q)$ time.  That is,  it takes $O(n/q)$ time to look up
the initial value and the difference vector and the $R$ table lookups
for the the $O(n/q)$
groups. It takes $O(q)$ time for the residual
elements. Thus it takes $O(n^2 \times (n/q + q))$ time to fill the
table. Every cell is involved in at most two vector computations, where
the difference to its neighbor is computed once for the row and for the
column vector. This takes an amortized $O(n^2)$ time which is dominated
by the rest of the algorithm.

When $q = \log n$, the total time for the entire algorithm is $O(\log
n\, 2^{2\log n}\, + n^2 \, + \,n^2\times (\frac{n}{\log n} + \log n))
= O(n^2 \log n + n^3/\log n) = O(n^3/log n)$.

\subsection{Other Variants}
\paragraph{FG Algorithm.}

Frid and Gusfield~\cite{FG} first showed how the Four-Russians approach
could be applied to the RNA-folding problem.  We will call their algorithm
the {\em FG} algorithm.  {\em FG} and {\em two-vector} algorithms are
variants of the same idea.  We will highlight the differences in
preprocessing and the maximum value computation by the
Four-Russians technique. In particular, we will show the maximum
computation in step~\ref{alg:four-rus-step} of Algorithm~\ref{alg}.

After computing the $q$-contiguous cells of a group in a row, the
value in the initial cell $D(i,p)$ and the horizontal
difference vector $v_p$ are known. They run the preprocessing
algorithm in page~\pageref{alg:preproc} for this
fixed $v_p$ vector together with all possible vertical difference
vectors.  They add the value of $D(i,p)$ to the maximum and
table the result. This preprocessing step is computed for every block
of every row. 
The preprocessing table $R$ is indexed by row number, group number and
a vector (which is a potential column vector).  The horizontal vectors
need not be stored. 

To fill cell $(i,j)$, they iterate over all groups and find the $q$-length
column vectors. The preprocessed value for this vector in the
corresponding block is retrieved from the table and the result is
added to $D(q,j)$.

%

The preprocessing is for horizontal vectors seen in the table.
Since the horizontal vectors are not known beforehand, the precomputation
cannot be done prior to the main algorithm.  Instead, it is interleaved
with the computation of the table.  They fill part of the DP table and use
the vectors to complete some preprocessing,  which in turn is used
fill another part of the table and so on.

Since the preprocessing is done for every group of every row, the same
horizontal vector can be seen multiple times in the table.  This leads
to duplicated work and slower running time than the two-vector algorithm.

\paragraph{Memoization} The two-vector method computes the
preprocessing over all possible vector pairs and the FG method
for only the horizontal vectors that are seen in the table.  Stated
this way, a hybrid approach suggests itself.

In our next variants, we memoize the results for a pair of vectors. Like
the two-vector approach, the preprocessing is done only once for a
vector pair and like the FG algorithm, it is only for the vectors
seen in the table and the preprocessing is interleaved with the main
algorithm. Since the preprocessing table is indexed by two vectors,
unlike the FG algorithm, the results are computed only once for every
vector seen.

In the partially memoized version, upon completion of elements of a
group,
if a new horizontal vector is seen, we pair it with all possible $2^q$
column vectors and the results are tabled.  In the completely memoized
version, the result for a pair of vectors is computed the first time
the pair is observed and the result is stored in the table.  The result for
future occurrences of the same pair are obtained by a table lookup.
the rest of the algorithm is identical to the two-vector method.

All these variants take $O(n^3/\log n)$ time but the memoized versions
potentially store fewer vectors than the two vector method and will
have a similar worst-case runtime in practice as the two-vector method.  But, as argued before,
the FG method does duplicated work and will be slower in practice.

\section{Parallel Algorithm}
\label{sec:par}
The Nussinov DP algorithm can be parallelized with $n$ processes to get an
$O(n^2)$ parallel algorithm.  We assign one parallel process to a
column.  In the $i^\textrm{th}$ iteration, each process computes the value for the
$i^\textrm{th}$ diagonal entry.  That is, the successive diagonals are 
solved in iterations and in each iteration the entries of the diagonal
are solved in parallel.  To compute the value for cell $(i,j)$, the
entries in the row to its left and in the column below $(i,j)$ are
needed.  Since these values are computed in earlier iterations, each
diagonal cell can be filled independent of the other processes.

A process has to compute the value for $O(n)$ cells and for each cell
it needs to access $O(n)$ other cells. Thus the total computation
takes $O(n^2)$ time with $n$ processes.

The parallel algorithm for process $j$ for $j = 1, 2, \ldots, n$:
\algsetup{indent=2em}
\begin{algorithmic}[1]
\STATE {$D(j+1,j) \gets 0$, $D(j,j) \gets 0$}
\FOR {$i = j$ down to 1 } 
  \STATE $D(i,j) \gets D(i+1, j-1) + b(S[i], S[j])$
  \label{par:loop} \FOR {$k= i+1$ to $j$}
    \STATE {$D(i,j) \gets \max\{ D(i,j), D(i, k-1) + D(k,j) \} $}
  \label{par:end-loop} \ENDFOR
  \STATE{ Synchronize with other processes}
\ENDFOR
\end{algorithmic}

We will describe the use the two-vector Four-Russians method to obtain an
$O(n^2/\log n)$ algorithm below. The preprocessing step that enumerates
the solution for $2^{q} \times 2^{q}$ difference vectors is embarrassingly
parallel and we do not discuss the parallel algorithm for it.

As before, we have $n$ processes one for each column.  Each process solves
the entries of the column from bottom to top.  Instead of computing
the maximum over each cell in the inner loop (lines~\ref{par:loop} --
\ref{par:end-loop} in the parallel algorithm above), we use the Four-Russians
technique to  solve $q$ cells in one step by looking up the table computed
in the preprocessing step.

Let $d_H(i,j)$ be the horizontal difference vector for cells $D(i,j),
\ldots, D(i+q-1, j)$  and let $d_V(i,j)$ be the vertical difference
for cells $D(i,j), \ldots, D(i+q-1, j)$.  We modify the inner loop of
the parallel algorithm as follows:

\algsetup{indent=2em}
\begin{algorithmic}[1]
\FOR {$k'= 0$ to  $\lfloor j/q \rfloor -1$}
\STATE {$k = i + k'*q$}
\STATE{ $D(i,j) = \max\{ D(i,j), 
D(i, k) + D(k+1,j) + R[d_H(i,
k)][d_V(k+1, j)] \} $}
\ENDFOR
\FOR {$ k = \lfloor j/q \rfloor \times q$ to $j$}
\STATE {$D(i,j) \gets \max\{ D(i,j), D(i, k) + D(k+1,j) \} $}
\ENDFOR
\STATE {Compute the horizontal and vertical differences and store them
in $d_H(i-q+1,j)$ and $d_V(i,j)$ respectively.}
\end{algorithmic}

For each entry, the first loop takes $O(n/q)$ time and the second loop
takes $O(q)$ time.  Since all the processes are solving the
$k^\textrm{th}$ diagonal in the $k^\textrm{th}$ iteration, all of them
execute the same number of steps before synchronization.  Note that we
compute the horizontal and vertical differences for every node, unlike
in section~\ref{sec:two-vec} where they are computed every
$q^\textrm{th}$ cell, to ensure that every process performs the same
number of steps and simplify the analysis.
The difference vectors can be computed in $O(q)$
time.   These can also be computed in constant time by shifting the
previous difference vector and appending the new difference.  But we
will not assume this simplification for the time bound computation.

Thus each entry can be computed in $O(n/q + q)$ time.  There are $O(n)$
entries for each process, thus the total time taken for all processes
to terminate is $O(n^2/q + nq)$.  With $q = \log n$ as before, this
gives an $O(n^2/ \log n)$ algorithm.  

\section{Parallel Implementation}
\label{sec:CUDA}
\subsection{GPU Architecture}
Graphics processing units (GPUs) are specialized processors designed
for computationally intensive real-time graphics rendering.
Compute Unified Device Architecture (CUDA)
is the computing engine designed by NVIDIA for their GPUs.

The programmer can group threads in a {\em block}, which in turn can be
organized in a {\em grid} hierarchy. Memory hierarchy includes thread-specific
local memory, block-level shared memory for all threads in the block
and global memory for the entire grid.  The access times increases along
the hierarchy from local to global memory.

Since the access to global memory is
slower (more clock cycles than local memory access), it is efficient for
the  threads within a block to access contiguous memory locations. Then
the hardware {\em coalesces} memory accesses for all threads in a
block into one request.
More specifically, in our application, if a matrix is stored in row-major
order and if the threads in a block access contiguous elements of a row,
then the accesses can be coalesced.  However, accessing elements along
a column is inefficient as distant memory elements have to be fetched
from different cache lines.

Programs that observe the hardware specifications can exploit the
optimizations in the system and are fast in practice.  We designed
the program that exploits the parallel structure of the DP
algorithm and the hardware features of the GPU.  


\subsection{Related Work}
\label{sec:cuda-rel} 
As mentioned earlier, the cells of a diagonal are independent of
one another and can be computed in parallel.  In Stojanovski et
al.~\cite{SGM2011}, elements of the diagonal are assigned to a block
of threads.  This design does not handle memory coalescence for either
row or column accesses.  Chang et al.~\cite{Chang2010} allocate an $n
\times n$ table and reflect the upper-triangular part of the matrix on
the main diagonal.  Successive elements of a column are fetched from
the row in the reflected part of the matrix. When threads of a block are
assigned to elements of a diagonal, the successive column accesses for
a thread are to consecutive memory cells.  However, this does not allow
coalesced access for threads within a block. Rizk and Lavenier~\cite{rizk}
show an implementation for RNA folding under energy models.  They
show a tiling scheme where a group of cells are assigned to a block of
threads to reuse the data values that are fetched from a
column. In this paper, we show that storing the row and column vectors in
different orders for two-vector method can further improve the efficiency.

\subsection{Design of the Four-Russians CUDA Program}
We briefly describe the design of the CUDA program; 
a longer discussion can be found in~\cite{RNA-tr}.

We group 
cells together into tiles,   
where each tile is a composite of $q \times q$ cells.  The tiles along
a diagonal can be computed independent of each other. Each tile is
assigned to a block of threads and computed in parallel.  After all the entries of the tile are
computed, only the horizontal and vertical differences are stored.



To fill a tile, the horizontal differences of all the tiles to the
left and vertical differences from the tiles underneath are accessed.
These difference vectors are stored in different orders.  The
horizontal difference vectors of the rows of a tile are stored in
contiguous memory locations
and the tiles are stored in row-major order.  The vertical difference
vectors of the columns of a tile are stored together and
the tiles are grouped in column-major order.


To fill a tile, the horizontal difference vectors from a tile to the left
are fetched.  When each thread retrieves one vector, the block of
threads accesses contiguous memory locations and the memory accesses
are
coalesced.  Successive iterations fetch the tiles along a row which
are in contiguous memory locations.  Similarly the vertical differences of
a tile below are accessed in one coalesced memory access by the
threads of the block.  

\section{Empirical Results}
\label{sec:rna-expts}

Prior to empirical evaluation, the FG algorithm was expected to be the
slowest due to redundant work.  The memoized versions were expected to
be faster than the two-vector algorithm, as they preprocess only a
subset of the $2^{2q}$ vectors seen in the table. 

We ran the programs on complete mouse non-coding RNA sequences. We
also tested the performance on random substrings on real RNA sequences
and random strings over {\tt A,C,G,U}.

The FG algorithm, while faster than Nussinov, was the slowest among the
Four-Russians methods, as expected.  The completely memoized version
was slower than the other two variants.  This is because every lookup
of the preprocessing table includes a check to see if the pair of
vectors has already been processed.  There are $2^{2q}$ unique vector
pairs but there are $O(\frac{n^3}{q})$ queries to the preprocessing
table and each query involves checking if the vector pair has been
processed plus the processing time for new pairs.  There are
$O(\frac{n^2}{q^2})$ vector pairs in the table. For larger $n$
(eg., $n > 1000$ and $q=8$), 
all the $2^{2q}$ vectors are expected to be present in
the DP table. Generally, memoized subproblems are relatively expensive
compared to the lookup.  Since the preprocessing here has only $q$
steps, the advantage of memoization is not seen.

The partially memoized version was slightly slower than the two vector
algorithm.  Again, the advantage of potentially less preprocessing
than the two-vector method  is erased by the need
to check if a vector has been processed. The two-vector
method was the fastest on all sequence lengths tested.

For short sequences the two vector method took negligible time (less than 0.2
seconds up to 1000 bases) and are not reported.  For longer sequences,
we noticed that using longer vector lengths reduced the running time
and the improvement saturated beyond $q=8$ or 9.  Beyond this, the
extra work in preprocessing overshadowed the benefit.
A similar trend was seen for the memoized versions
too.  However, for the FG method $q=3$ gave the best speedup and
longer vector lengths had a slower running time due to the extra
preprocessing at every group.

All the programs were written in C++ compiled with the highest compiler optimizations.
We only discuss the experimental results on a desktop and two GPU cards
in this paper.  Detailed notes on running times can
be found in~\cite{RNA-tr}.

We measured the running times of the different versions of our
serial algorithms on a desktop machine with a Pentium II 3GhZ processor and 1MB cache.
The running times of Nussinov and the speedups of
various programs compared to Nussinov are shown in the table below.
For sequences of length 6000, the two-vector method takes close to a minute on the desktop.
\begin{center}
\begin{tabular}{|c||c|c|c|c|c|}
\multicolumn{6}{c}{Speedup factors of the serial programs on the desktop} \\
\hline
& Time & \multicolumn{4}{c|} {Speedup} \\
\hline
Length & Nussinov & Two-vector & Partially & Completely & FG \\
&(in secs) & & Memoized & Memoized &  \\
\hline
2000 & 16.5  & 7.7  & 7.3  & 5.6  & 3.0  \\
\hline 
3000 & 62.5  & 8.8  & 8.3  & 6.4  & 3.4  \\
\hline 
4000 & 196.6 & 11.9 & 11.4 & 8.8  & 4.7  \\
\hline 
5000 & 630.3 & 21.1 & 18.9 & 14.7 & 7.8  \\
\hline 
6150 & 1027.8 & 18.1 & 17.0 & 13.3 & 7.03\\
\hline
\end{tabular}
\end{center}
\begin{figure}[t!]
\begin{center}
\begin{subfigure}{0.5\textwidth}
\hspace{-10pt}
\includegraphics[scale=0.4]{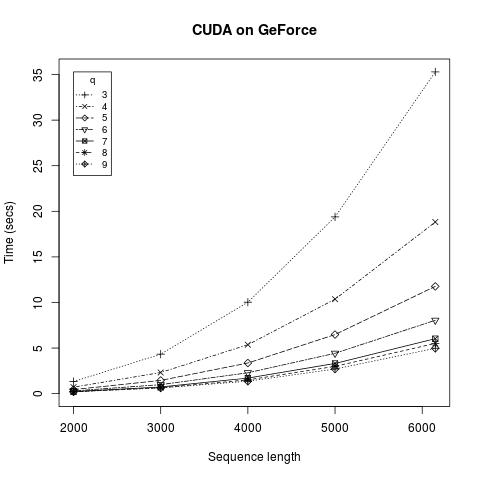}
\end{subfigure}
\quad
\begin{subfigure}{0.45\textwidth}
\centering \includegraphics[scale=0.4]{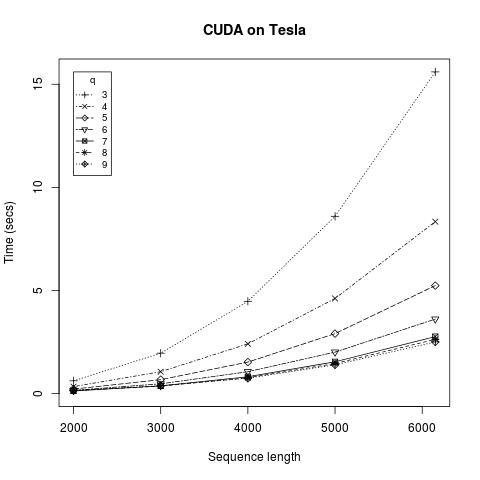}
\end{subfigure}
\end{center}
\caption[Running time of the CUDA program on two GPUs.]{Running time of the CUDA program on two GPUs. The programs
run twice as fast on the Tesla card than the GeForce card.}
\label{fig:cuda}
\end{figure}

Fig.~\ref{fig:cuda} shows the execution times on two GPU cards --
GeForce GTX 550 Ti card with 1GB on-card memory and 
Tesla C2070 with 5GB memory. The programs take about a second
for sequences up to 4000 bases long, and takes about 5 seconds and 2.5
seconds for sequences of length 6000.  The running times for various
sequence lengths are shown in the table below.
  \begin{center}
  \begin{tabular}{|c||c|c|}
  \multicolumn{3}{c} {Running times for the parallel program (in
  secs)}\\
  \hline
  Length & On GeForce & On Tesla \\
  \hline
  2000 & 0.20 & 0.14 \\
  \hline
  3000 & 0.62 & 0.38 \\
  \hline
  4000 & 1.36 & 0.74 \\
  \hline
  5000 & 2.70 & 1.39 \\
  \hline
  6000 & 4.97 & 2.50 \\
  \hline
  \end{tabular}
  \end{center}

\section{Conclusions and Future Work}

We described the two-vector method for using the Four-Russians technique
for RNA folding.  This method is simpler than the Frid-Gusfield
method.  It also improves the bound of the parallel algorithm by a
$\log n$ factor to $O(\frac{n^2}{\log n})$.
We showed two other variants that memoize the preprocessing results.
These methods are faster than Nussinov by up to a factor of 20 and the
Frid-Gusfield method by a factor of 3.

In the future, it will be interesting to see the application of the
Four-Russians technique for other methods that use energy models with
thermodynamic parameters.  The Frid-Gusfield method has been applied
to RNA co-folding~\cite{FG-cofolding} and folding with
pseudoknots~\cite{FG-psuedo} problems; the application of the two-vector method
to those problems and its implications are also of interest.  It will be
interesting to compare our run time with the other improvements over
Nussinov, like the boolean matrix multiplication method~\cite{Akutsu1999}.

\section*{Acknowledgements}
The first-listed author thanks Prof.~Norm Matloff for the opportunity
to lecture in his class; this project spawned out of that lecture.
Thanks also to Prof.~John Owens for access to the server with a Tesla
card.  Thanks to Jim Moersfelder and Vann Teves from Systems Support Staff
for help in setting up the CUDA systems.

   \bibliographystyle{abbrv} 
   \bibliography{thesis}
\end{document}